\numberwithin{equation}{section}
\theoremstyle{plain}
\newtheorem{thm}{\protect\theoremname}[section]
\theoremstyle{plain}
\theoremstyle{definition}
\DeclareMathAlphabet{\mathcal}{OMS}{cmsy}{m}{n}
\providecommand{\lemmaname}{Lemma}
\providecommand{\theoremname}{Theorem}
\providecommand{\definitionname}{Definition}
\providecommand{\lemmaname}{Lemma}
\providecommand{\theoremname}{Theorem}
\begin{document}

\title{Random large eddy simulation for 3-dimensional incompressible
viscous flows}

\author{Zihao Guo\thanks{Zhongtai Securities Institute for Financial Studies, Shandong University, Jinan, China, 250100, Email:
\protect\href{mailto:gzhsdu@mail.sdu.edu.cn}{gzhsdu@mail.sdu.edu.cn}} \ and \ Zhongmin Qian\thanks{Mathematical Institute, University of Oxford, Oxford, United Kingdom, OX2 6GG, and Oxford Suzhou Centre for Advanced Research, Suzhou, China.  Email:
\protect\href{mailto:qianz@maths.ox.ac.uk}{qianz@maths.ox.ac.uk}}}
\maketitle
\begin{abstract}

We develop a numerical method for simulation of incompressible viscous flows by integrating the technology of random vortex method with the core idea of Large Eddy Simulation (LES). Specifically, we utilize the filtering method in LES, interpreted as spatial averaging, along with the integral representation theorem for parabolic equations, to achieve a closure scheme which may be used for calculating solutions of Navier-Stokes equations. This approach circumvents the challenge associated with handling the non-locally integrable 3-dimensional integral kernel in the random vortex method and facilitates the computation of numerical solutions for flow systems via Monte-Carlo method. Numerical simulations are carried out for both  laminar and turbulent flows, demonstrating the validity and effectiveness of the method. 

\medskip{}

\emph{Key words}: large eddy simulation, Monte-Carlo simulation, viscous flows, turbulent flows
\medskip{}

\emph{MSC classifications}: 76M35, 76M23, 60H30, 65C05, 68Q10.
\end{abstract}


\section{Introduction}

The idea of utilising  probabilistic approach
is not new in Computational Fluid Dynamics (CFD), although, to the
best knowledge of the present authors, it seems that methods such
as Monte-Carlo simulation, particle system method of turbulence and other
stochastic methods have not been used extensively in industry. While these
methods are quite stimulating and there is a high potential of being useful tools
in the study of fluid flows including turbulent flow simulation. Among
them, the random vortex method (cf. \citep{Chorin1973}, \citep{CottetKoumoutsakos2000} and \citep{Majda and Bertozzi 2002} for details) developed over the last three decades
stands out as an important tool, whose successful applications in
simulations of incompressible flows however are limited to special
flows, mainly are restricted to two dimensional flows. The random
vortex method for three dimensional (3D) flows has been developed recently
in \citep{Qian2022} by using a Feynman-Kac type formula, revealing the
difficulty caused by the non-linear stretching term which does not
appear in two dimensional flows. Technically, the main problem in
the random vortex method for three dimensional incompressible flows
lies in the fact the singular integral kernel appearing in the stochastic
integral representation is no longer locally integrable, which certainly
induces instability of corresponding numerical schemes. Let us recall the key idea
in random vortex methods. Let $u(x,t)$
and $p(x,t)$ denote the velocity and the pressure respectively. Then
the motion equation of the flow is the Navier-Stokes equations (cf.
\citep{Landau1987}):

\[
\partial_{t}u+(u\cdot\nabla)u=\nu\Delta u-\nabla p+F
\]
and

\[
\nabla\cdot u=0,
\]
where $\nu>0$ is the kinematic viscosity, and $F$ is an external force applying
to the flow. The key observation is that the velocity $u$ can be
recovered from the vorticity $\omega=\nabla\wedge u$ through the
Biot-Savart law by integrating the vorticity against a singular integral kernel $K(y,x)$.
In fact, $K$ is the gradient of the Green function, and therefore
$K$ is a nice kernel for integration. While for three dimensional flows,
a further derivative of the Biot-Savart kernel $K$ is required, which
is not locally integrable unfortunately. This difficulty originates from the
random vortex method which aims to  calculating numerically
the vorticity $\omega$ first based on the vorticity transport equation
\[
\partial_{t}\omega+(u\cdot\nabla)\omega=\nu\Delta\omega+(\omega\cdot\nabla)u+\nabla\wedge F
\]
and the velocity $u$ is recovered from $\omega$. Let us recall the
technical steps briefly for two dimensional flows and the external
vorticity $\nabla\wedge F=0$ for motivating our approach in the present
paper. For this case, the non-linear stretching term $(\omega\cdot\nabla)u$
vanishes identically. Let $X$ be Brownian fluid particles $X$ with
velocity $u(x,t)$. That is,  $X$ is determined by stochastic differential equation 
\begin{equation}
\textrm{d}X_{t}=u(X_{t},t)\textrm{d}t+\sqrt{2\nu}\textrm{d}B_t, \ \ X_0=\xi. \label{SDE-u}
\end{equation}
For avoiding confusion, we use  $X^{\xi}$ to denote Brownian fluid particles issued from
a location $\xi$. Since $\omega$ is
a solution of the vorticity transport equation, an easy exercise shows that
\[
\omega(y,t)=\int h(0,\xi;t,y)u_{0}(\xi)\textrm{d}\xi,
\]
where $h(s,x;t,y)$ (for $s<t$) is the transition probability function
of the Taylor diffusion $X$, cf. \citep{Stroock1979}. Together with the Biot-Savart law and
the Fubini theorem we may represent the velocity
\[
u(x,t)=\int K(y,x)\omega(y,t)\textrm{d}y=\int\left[\int K(y,x)h(0,\xi;t,y)\textrm{d}y\right]u_{0}(\xi)\textrm{d}\xi.
\]
Now the integral with respect to $y$, that is, $\int K(y,x)h(0,\xi;t,y)\textrm{d}y$
is the average $\mathbb{E}\left[K(X_{t}^{\xi},x)\right]$, so that
$u(x,t)=\int\mathbb{E}\left[K(X_{t}^{\eta},x)\right]u_{0}(\eta)\textrm{d}\eta$,
which allows to reformulate the equation for $X$ as a stochastic
differential equation of Vlasov-McKean type
\[
\textrm{d}X_{t}^{\xi}=\left(\int\left.\mathbb{E}\left[K(X_{t}^{\eta},x)\right]u_{0}(\eta)\textrm{d}\eta\right|_{x=X_{t}^{\xi}}\right)\textrm{d}t+\sqrt{2\nu}\textrm{d}B,\quad X_{0}^{\xi}=\xi,
\]
and therefore numerical schemes may be developed accordingly for calculating
two dimensional flows. The idea similarly applies to three dimensional
flows but unfortunately the resulting stochastic integral equation involves the gradient of $K$, cf. 
\citep{Li2023}, \citep{Qian2022} and \citep{Qian2022_2} for the details
about random vortex method and integral representation theorem.

There are mainly two difficulties in numerically solving fluid dynamic
equations of turbulent flows, which are non-linear characteristics
of the Navier-Stokes equation and continuously changing complex vortices
of different scales. In engineering applications, people are more
concerned about the time-averaged effect of turbulent motion, so some
of the commonly used turbulence models are based on Reynolds time
averaging, called Reynolds Average Numerical Simulation (RANS). RANS
smooths out some tiny details of turbulent motion, and the model has
many artificial settings. Therefore, RANS models have limited simulation
capabilities for complex and delicate turbulent structure such as
the separation phenomena of flow past body. With the increasing of
the computing power and computer memory capacity, some advanced research
institutions solve the Navier-Stokes equations without any form of
simplification, but use extremely fine grids for directly numerically
computing solutions of the Navier-Stokes equations. This kind of approach may be called
Direct Numerical Simulation (DNS). However, DNS are still computational
expanse to implement. The Large Eddy Simulation (LES) method, which
is between DNS and the RANS method, has gradually emerged in CFD community and has developed into a promising method
for numerically solving turbulent flows because it is more sophisticated
than the RANS and can be implemented on a conventional computer.

LES is considered as a useful tool for numerically
simulating turbulent  flows (cf. \citep{Lesieur2005} and \citep{Volker2004}).
The principal idea in LES is to reduce the computational cost by ignoring
the eddies in small length scales, which are very expensive in numerical
experiments, via filtering of the Navier--Stokes equations. Such
filtering, which can be viewed as spatial-averaging, effectively removes
the information of eddies in small-scale from the numerical solution, for original ideas, cf.
 \citep{Deardorff1970} and \citep{Reynolds1895}. The filtered velocity
$\tilde{u}(x,t)$ can be represented as 

\[
\tilde{u}(x,t)=\int_{\mathbb{R}^{3}}\chi(x-x')u(x',t)\mathrm{d}x'
\]
where $x,x'\in\mathbb{R}^{3}$, $\chi(x,t)$ is the filtering function
including top-hat function, Gaussian function and so on, which satisfies that
$\intop\chi(x)\mathrm{d}x=1$. So the filtered Navier-Stokes equation
becomes 
\[
\partial_{t}\tilde{u}+\widetilde{(u\cdot\nabla)u}=\nu\Delta\tilde{u}-\nabla\tilde{p}+\tilde{F}
\]
and
\[
\nabla\cdot\tilde{u}=0,
\]
Because $\widetilde{(u\cdot\nabla)u}$ can not be simplified as $(\tilde{u}\cdot\nabla)\tilde{u}$,
so researchers introduced sub-grid-scale stress $\tau$, defined as 
\[
\nabla\cdot\tau=(\tilde{u}\cdot\nabla)\tilde{u}-\widetilde{(u\cdot\nabla)u}
\]
The filtered Navier-Stokes equation can be rewritten as 
\[
\partial_{t}\tilde{u}+(\tilde{u}\cdot\nabla)\tilde{u}=\nu\Delta\tilde{u}-\nabla\tilde{p}+\tilde{F}-\nabla\cdot\tau
\]

The tensor $\tau$ measures the effect of small eddies on the flow system which
must be modelled because its information is not irrelevant. The influence
of small-scale eddies on the equation of motion is described by some
other models (cf. \citep{Piomelli2002}, \citep{Smagorinsky1963} and
\citep{Pitsch2006}), but it is not the study target of our paper.
Through LES, the filtered equations focus on analysing
large scale eddies, directly capturing the most important turbulent
structure and dynamical information, while approximating small-scale
eddies. This method effectively reduces the computational complexity
and retains the main physical properties. 

Random large eddy simulation
method we are going to develop in the present paper combines the core idea of filtering
in LES with the integral representation of parabolic
equation in random vortex method,
and subsequently obtain the numerical method of incompressible viscous
flows based on it. Using the stochastic integral representation of
parabolic equation and basic idea of LES, we are
able to close the solution of Navier-Stokes equation, which makes
it possible for us to simulate incompressible flows via Monte-Carlo
method. 

The paper is organised as follows. In Section 2, we propose
the random large eddy simulation method based on stochastic integral
representation and LES. Then, numerical schemes
and simulation experiments are presented in Section 3. Finally, we
introduce conditional law duality and integral representation theorem
in Appendix, which is the theoretical foundation of the random large
eddy simulation method established in Section 2. 

\section{Random large eddy simulation method}

The present work aims to implement the ideas of random vortex method
described in the introduction directly to the velocity rather than
through the vorticity equation, so that the method developed in the
present paper works for three dimensional (turbulent) flows as well.
Firstly, for an incompressible flow, the divergence of the velocity
vanishes: $\nabla\cdot u=0$, which implies that the probability transition
function $p(s,x;t,y)$ is the fundamental solution for the backward
parabolic operator $\partial_{t}+\nu\triangle+u\cdot\nabla$, so it
is also the fundamental solution of the forward parabolic operator
$\partial_{t}-\nu\triangle+u\cdot\nabla$. Therefore $u(x,t)$ has the following implicit integral representation 
\begin{equation}
u(x,t)=\int_{\mathbb{R}^3} p(0,\eta;t,x)u_{0}(\eta)\textrm{d}\eta+\int_{0}^{t}\int_{\mathbb{R}^3} h(s,\eta;t,x)\left(-\nabla h(\eta,s)+F(\eta,s)\right)\textrm{d}\eta\textrm{d}s\label{u-rep1}
\end{equation}
where $\nabla p$ can be calculated by using the Biot-Savart law.
In fact, since $\nabla\cdot u=0$, the pressure $p(x,t)$ at every
instance $t$ satisfies the Poisson equation

\[
\Delta p=-\sum_{i,j=1}^{3}\frac{\partial u^{j}}{\partial x^{i}}\frac{\partial u^{i}}{\partial x^{j}}+\nabla\cdot F
\]
so that, according to Green formula,

\[
p(x,t)=\int_{\mathbb{R}^{3}}G_{3}(y,x)\left.\left(-\sum_{i,j=1}^{3}\frac{\partial u^{j}}{\partial x^{i}}\frac{\partial u^{i}}{\partial x^{j}}+\nabla\cdot F\right)\right|_{(y,t)}\mathrm{d}y
\]
where $G_{3}(y,x)=-\frac{1}{4\pi}\frac{1}{\left|y-x\right|}$ is the
Green function on $\mathbb{R}^{3}$. Differentiating under integration
(which we assume is legible) to deduce that
\begin{equation}
\nabla p(x,t)=\int_{\mathbb{R}^{3}}K_{3}(y,x)\left.\left(\sum_{i,j=1}^{3}\frac{\partial u^{j}}{\partial x^{i}}\frac{\partial u^{i}}{\partial x^{j}}-\nabla\cdot F\right)\right|_{(y,t)}\mathrm{d}y,\label{grad-p-01}
\end{equation}
where
\[
K_{3}(y,x)=-\nabla_{x}G_{3}(y,x)=\frac{1}{4\pi}\frac{y-x}{\left|y-x\right|^{3}}\quad\textrm{ for }y\neq x
\]
is the Biot-Savart kernel. The representation (\ref{grad-p-01}) shall
be used to update the pressure $p(x,t)$ in solving numerically the
velocity $u(x,t)$ via the Navier-Stokes equations.

For simplicity we set $g=-\nabla p+F$. For every $\xi$ and instance
$t\geq0$, $X_{t}^{\xi}$ denotes the position in the state space
of the Brownian fluid particles with velocity $u(x,t)$ started from
$\xi$, and $X:(t,\xi)\rightarrow X_{t}^{\xi}$ defines a random field.
In terms of the law of random field $X$, (\ref{u-rep1}) can be written
formally as
\begin{equation}
u(x,t)=\int_{\mathbb{R}^3}\mathbb{E}\left[\delta_{x}(X_{t}^{\eta})\right]u_{0}(\eta)\textrm{d}\eta+\int_{0}^{t}\int_{\mathbb{R}^3}\mathbb{E}\left[\left.\delta_{x}(X_{t}^{\eta})\right|X_{s}^{\eta}\right]g(\eta,s)\textrm{d}\eta\textrm{d}s\label{u-rep1-1}
\end{equation}
by using the formal symbol $h(0,\eta;t,x)=\mathbb{P}\left[X_{t}^{\eta}=x\right]$.
The problem here is that of course $\delta_{x}(X_{t}^{\eta})$ is
a generalised Wiener functional so that it is difficult to calculate
it numerically. Also the conditional average $\mathbb{E}\left[\left.\delta_{x}(X_{t}^{\eta})\right|X_{s}^{\eta}\right]$
is expensive to compute. To overcome these difficulties, we utilise
a key idea from LES. In LES,
one calculates local averaged velocity which shall give the global
structure of the flow (in particular for turbulent flows). We shall
adopt the LES approach, avoiding the modelling of the error term caused
by taking local average of the velocity.

To realise this scheme, we shall use a stochastic integral representation
for solutions of a linear parabolic equation. Since $u$ is a solution to 
\begin{equation}
\left(\nu\Delta-u\cdot\nabla-\partial_{t}\right)u+g=0,\label{eq:parabolic1-1}
\end{equation}
according to the representation theorem (\ref{Thm:representation})
to be established in Appendix
\begin{equation}
u(x,t)=\int_{\mathbb{R}^{3}}h(0,\eta;t,x)u_{0}(\eta)\mathrm{d}\eta+\int_{0}^{t}\int_{\mathbb{R}^{3}}\mathbb{E}\left[g(X_{s}^{\eta},s)\left|X_{t}^{\eta}=x\right.\right]h(0,\eta,t,x)\mathrm{d}\eta\mathrm{d}s.\label{rep-u2}
\end{equation}
Comparing with (\ref{u-rep1-1}), the difference seems not so significant, and indeed the two representations are equivalent due to the assumption
that $\nabla\cdot u=0$, while (\ref{rep-u2}) involves only Brownian
fluid particles started at the same instance, which in fact greatly
reduces the computational cost.

By borrowing the key idea in LES, a filter function $\chi$ with a
compact support is applied for calculating the local average of the
velocity, denoted by $\tilde{u}(x,t)$. More precisely, 

\begin{equation}
\tilde{u}(x,t)=\int_{\mathbb{R}^{3}}\chi(\xi-x)u(\xi,t)\mathrm{d}\xi\label{filter_u}
\end{equation}
where $x\in\mathbb{R}^{3}$. In the present paper, we choose a parameter
$s>0$ as the mesh size, the filter function used in the paper is
Gaussian kernel $\chi(x)=\left(\frac{6}{\pi s^{2}}\right)^{\frac{3}{2}}\textrm{e}^{-\frac{6x^{2}}{s^{2}}}$
for $x\in\mathbb{R}^{3}$. 

Combining (\ref{rep-u2}) and (\ref{filter_u}), we may easily obtain
the following functional integral representation
\begin{align}
\tilde{u}(x,t) & =\int_{\mathbb{R}^{3}}\mathbb{E}\left[\chi(X_{t}^{\eta}-x)\right]u_{0}(\eta)\mathrm{d}\eta\nonumber \\
 & +\int_{0}^{t}\int_{\mathbb{R}^{3}}\mathbb{E}\left[\chi(X_{t}^{\eta}-x)g(X_{s}^{\eta},s)\right]\mathrm{d}\eta\mathrm{d}s\label{eq:rep_u}
\end{align}
for $x\in\mathbb{R}^{3}$, where the Brownian fluid particles are
defined in terms of the stochastic differential equation (\ref{SDE-u}).
In the approach of LES, one derives the evolution equation for the
locally averaged velocity $\tilde{u}(x,t)$, namely the Navier-Stokes
equations with the error term, and LES is then implemented by modelling
the error. Taking advantage of the integral representation (\ref{eq:rep_u}),
which though is an implicit representation, we may devise a natural
closure scheme without further modelling for the evolution of the
locally averaged velocity $\tilde{u}(x,t)$. 

More precisely, given a filter function $\chi$, we run Brownian fluid
particles with the locally averaged velocity $\hat{u}(x,t)$ defined by stochastic differential equation

\begin{equation}
\mathrm{d}\hat{X}_{t}^{\eta}=\hat{u}(\hat{X}_{t}^{\eta},t)\mathrm{d}t+\sqrt{2\nu}\mathrm{d}B_{t},\quad\hat{X}_{0}^{\eta}=\eta,\label{eq:rep_X}
\end{equation}
\begin{align}
\hat{u}(x,t)= & \int_{\mathbb{R}^{3}}\mathbb{E}\left[\chi(\hat{X}_{t}^{\eta}-x)\right]u_{0}(\eta)\mathrm{d}\eta\nonumber \\
 & +\int_{0}^{t}\int_{\mathbb{R}^{3}}\mathbb{E}\left[\chi(\hat{X}_{t}^{\eta}-x)\hat{g}(\hat{X}_{s}^{\eta},s)\right]\mathrm{d}\eta\mathrm{d}s,\label{hat-u}
\end{align}
where 
\begin{equation}
\hat{g}(x,t)=-\nabla\hat{p}(x,t)+F(x,t),\label{hat-g}
\end{equation}
and

\begin{equation}
\nabla\hat{p}(x,t)=\int_{\mathbb{R}^{3}}K_{3}(\hat{X}_{t}^{\eta},x)\left(\sum_{i,j=1}^{3}\frac{\partial\hat{u}^{j}}{\partial x_{i}}\frac{\partial\hat{u}^{i}}{\partial x_{j}}-\nabla\cdot F\right)\bigg|_{(\hat{X}_{t}^{\eta},t)}\mathrm{d}\eta.\label{eq:rep_g}
\end{equation}

The equations (\ref{eq:rep_X}, \ref{hat-u}, \ref{hat-g}, \ref{eq:rep_g})
compose a closed system which allows to calculate numerically approximate
locally averaged velocity of an incompressible fluid flow. 

If we utilise a Gaussian filter function $\chi$ , to update the approximate
pressure $\hat{p}(x,t)$, terms $\frac{\partial\hat{u}}{\partial x}$
in $\nabla\hat{p}$ can be calculated simply by differentiating $\hat{u}(x,t)$
via (\ref{hat-u}). 

\section{Numerical schemes and numerical experiments}

In this section we present several numerical experiments based on
the scheme defined by (\ref{eq:rep_X}, \ref{hat-u}, \ref{hat-g},
\ref{eq:rep_g}). For simplicity, we drop the hat script, and therefore
define the following system
\begin{equation}
\mathrm{d}X_{t}^{\eta}=u(X_{t}^{\eta},t)\mathrm{d}t+\sqrt{2\nu}\mathrm{d}B_{t},\quad X_{0}^{\eta}=\eta\label{f-X-1}
\end{equation}
where $B$ is a Brownian motion on some probability space, 
\begin{equation}
u(x,t)  =\int_{\mathbb{R}^{3}}\mathbb{E}\left[\chi(X_{t}^{\eta}-x)\right]u_{0}(\eta)\mathrm{d}\eta
+\int_{0}^{t}\int_{\mathbb{R}^{3}}\mathbb{E}\left[\chi(X_{t}^{\eta}-x)g(X_{s}^{\eta},s)\right]\mathrm{d}\eta\mathrm{d}s,
\label{f-u-1}
\end{equation}
\begin{equation}
g(x,t)=-\nabla p(x,t)+F(x,t)\label{f-g-1}
\end{equation}
and the pressure gradient is updated by
\begin{equation}
\nabla p(x,t)=\int_{\mathbb{R}^{3}}K_{3}(X_{t}^{\eta},x)\left(\sum_{i,j=1}^{3}\frac{\partial u^{j}}{\partial x_{i}}\frac{\partial u^{i}}{\partial x_{j}}-\nabla\cdot F\right)\bigg|_{(X_{t}^{\eta},t)}\mathrm{d}\eta.\label{f-p-1}
\end{equation}
In this scheme, the initial velocity $u_{0}$ and the external force
$F$ are given data. The whole scheme, which may be called the random
LES, is determined by a filter function $\chi$ which must be chosen
according to the mesh size when implement the simulation. 

The initial velocity $u(x,0)$ is given, so
the initial pressure $p(x,0)$ can be calculated directly, we choose
external force $F$ to be a constant so that $\nabla\cdot F=0$. 

Set mesh size $s>0$, time step $\delta>0$ and kinematic viscosity
$\nu>0$. For $i_{1},i_{2},i_{3}\in\mathbb{Z}$, denote $x^{i_{1,}i_{2},i_{3}}=(i_{1},i_{2},i_{3})s$,
$u^{i_{1},i_{2},i_{3}}=u(x^{i_{1,}i_{2},i_{3}},0)$. In numerical
scheme, we drop the expectation and use one-copy of Brownian particles.
We discrete the stochastic differential equation (\ref{f-X-1}) following
Euler scheme: for $t_{i}=i\delta,i=0,1,2,\cdots$.

\[
X_{t_{k}}^{i_{1},i_{2},i_{3}}=X_{t_{k-1}}^{i_{1},i_{2},i_{3}}+\delta u(X_{t_{k-1}}^{i_{1},i_{2},i_{3}},t_{k-1})+\sqrt{2\nu}(B_{t_{k}}-B_{t_{k-1}}),\quad X_{0}^{i_{1},i_{2},i_{3}}=x^{i_{1,}i_{2},i_{3}}
\]
where $X_{t_{k}}^{i_{1},i_{2},i_{3}}$ denotes $X_{t_{k}}^{x^{i_{1},i_{2},i_{3}}}$
for simplicity. Then integral representations in (\ref{f-u-1}), and
(\ref{f-p-1}) can be discretised as follows:
\[
u(x,t_{k})  =\sum_{i_{1},i_{2},i_{3}}s^{3}\chi(X_{t_{k}}^{i_{1},i_{2},i_{3}}-x)u^{i_{1},i_{2},i_{3}}
+\sum_{i_{1},i_{2},i_{3}}\sum_{j=1}^{k}s^{3}\delta\chi(X_{t_{k}}^{i_{1},i_{2},i_{3}}-x)g(X_{t_{j-1}}^{i_{1},i_{2},i_{3}},t_{j-1})
\]
and
\[
g(x,t_{k})=-\sum_{i_{1},i_{2},i_{3}}s^{3}K(X_{t_{k}}^{i_{1},i_{2},i_{3}},x)\sum_{i,j=1}^{3}A_{j}^{i}(X_{t_{k}}^{i_{1},i_{2},i_{3}},t_{k}) A_{i}^{j}(X_{t_{k}}^{i_{1},i_{2},i_{3}},t_{k})+F,
\]
where
\[
A(x,t_{k})=\sum_{i_{1},i_{2},i_{3}}s^{3}\nabla_{x}\chi(X_{t_{k}}^{i_{1},i_{2},i_{3}}-x)u^{i_{1},i_{2},i_{3}}
+\sum_{i_{1},i_{2},i_{3}}\sum_{j=1}^{k}s^{3}\delta\nabla_{x}\chi(X_{t_{k}}^{i_{1},i_{2},i_{3}}-x)g(X_{t_{j-1}}^{i_{1},i_{2},i_{3}},t_{j-1}).
\]

We next carry out several numerical experiments by using the previous scheme.

\subsection{Numerical experiments -- laminar flows}

By a laminar flow we mean a viscous flow with a small Reynolds number,  where the Reynolds number
is defined by $Re=\frac{U_{0}L}{\nu}$, where $U_{0}$ is the main stream
velocity and $L$ is the length scale. In our experiment,  $\nu=0.15$, $Re=800$,
length scale $L=2\pi$, so that $U_{0}=\frac{\nu}{L}Re=23.8$. The mesh
size $s=\frac{2\pi}{20}\thicksim L\sqrt{\frac{1}{Re}}$. and the time
step $\delta=0.001$. The numerical experiment is demonstrated at times
$t=0.3,t=0.6,t=0.9$. We set the initial velocity to be of the form
$u(x,0)=(U_{0}\sin(2x_{1}),U_{0}\cos(2x_{1}),0)$, and set force $F=(10,10,-9.81)$.
The velocity field and the vorticity field are shown in Figure \ref{l_u} and Figure \ref{l_w}. 

\begin{figure}[H]
\includegraphics[width=1\textwidth]{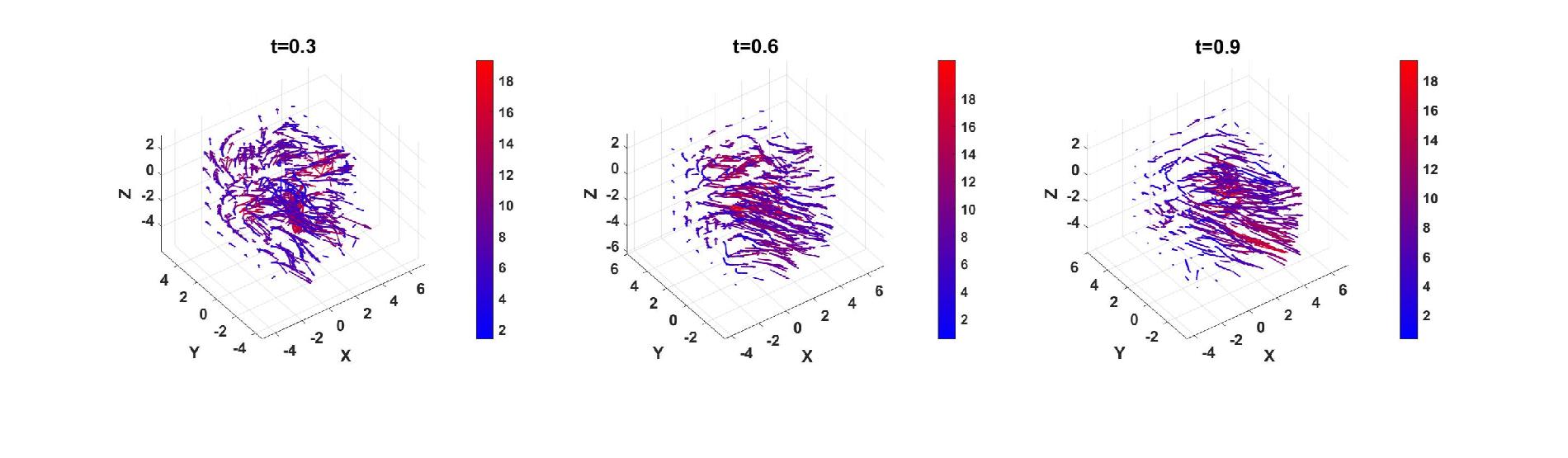}

\caption{\foreignlanguage{english}{Velocity fields of incompressible viscous flows on $\mathbb{R}^{3}$}}
\label{l_u}
\end{figure}

\begin{figure}[H]
\includegraphics[width=1\textwidth]{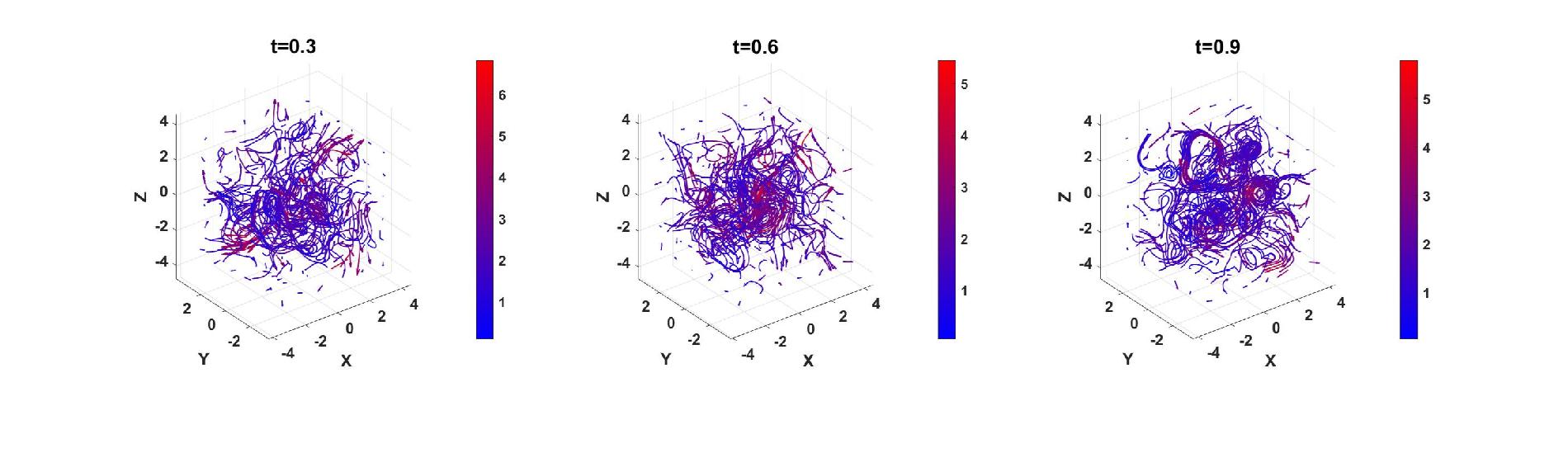}

\caption{\foreignlanguage{english}{Vorticity fields of incompressible viscous flows flows on $\mathbb{R}^{3}$}}
\label{l_w}
\end{figure}

Figure \ref{l_u} and Figure \ref{l_w} show a laminar flow where the velocity varies smoothly, while the vorticity field changes relatively fast, which is consistent with the real flow in question. To view the flows clearly, we also print sections of the vector fields 
with the plane $z=0$, plane $x=0$, and the plane $y=0$ in Figure \ref{l_z}, Figure \ref{l_x}, 
Figure \ref{l_y} respectively.

\begin{figure}[H]
\includegraphics[width=1\textwidth]{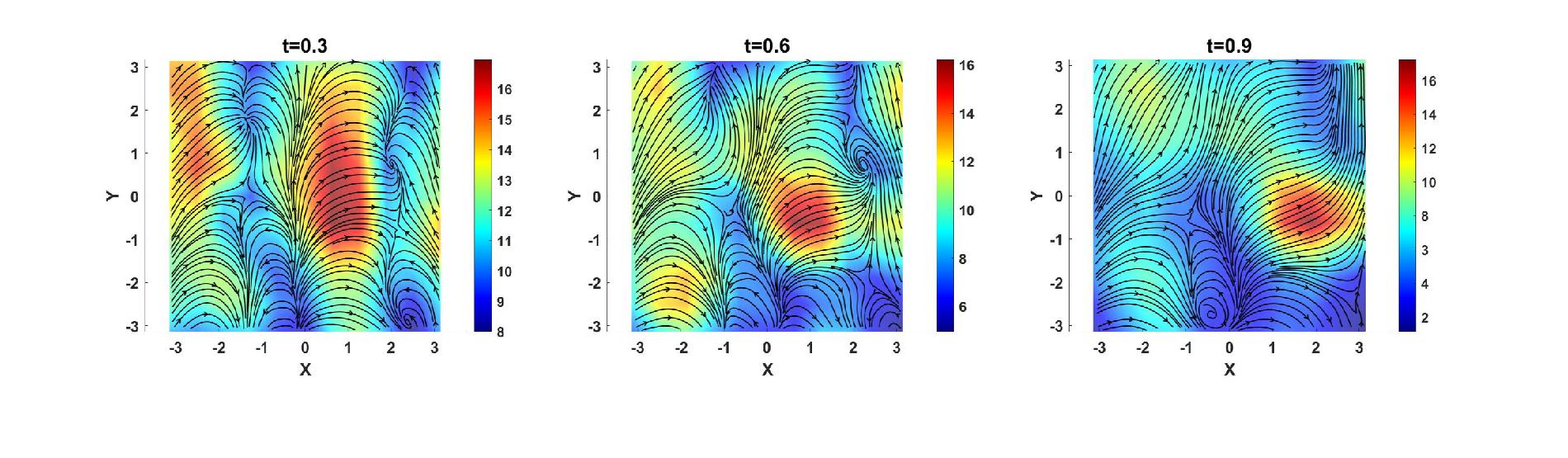}

\caption{\foreignlanguage{english}{Section Velocity fields of incompressible viscous flows on plane $z=0$}}
\label{l_z}
\end{figure}
As shown in Figure \ref{l_z}, because of the initial conditions, there are more vortices and complex flows occurring on the plane $z=0$. The external force $F$ applying to this laminar flows system is not strong enough, so the fluid changes are not particularly large. 

\begin{figure}[H]
\includegraphics[width=1\textwidth]{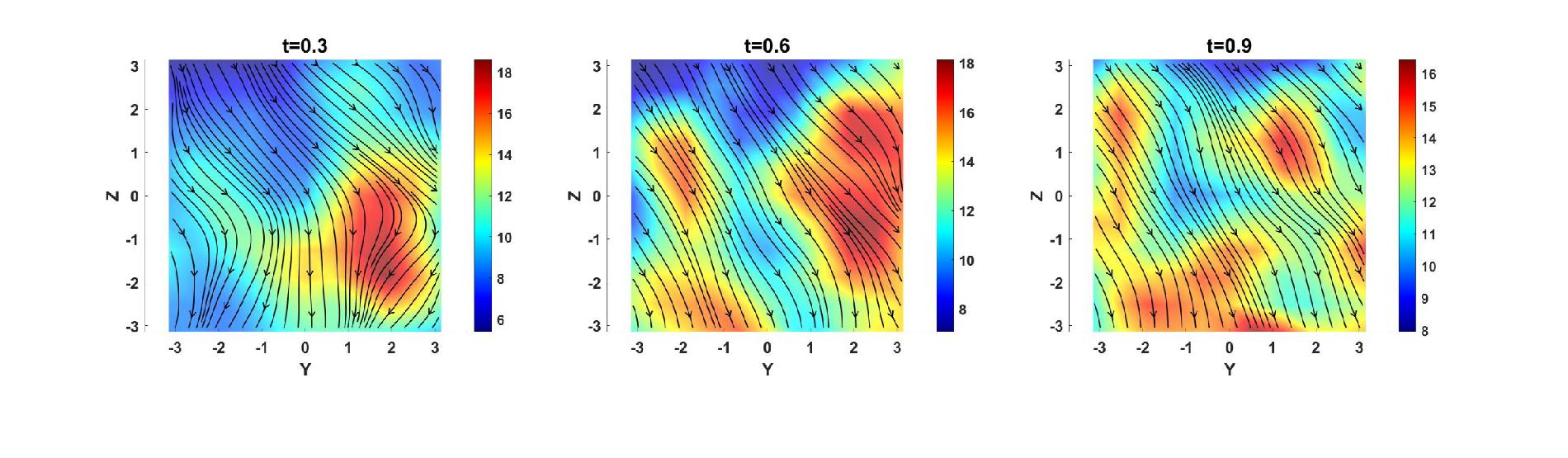}

\caption{\foreignlanguage{english}{Section Velocity fields of incompressible viscous flows on plane $x=0$}}
\label{l_x}
\end{figure}

\begin{figure}[H]
\includegraphics[width=1\textwidth]{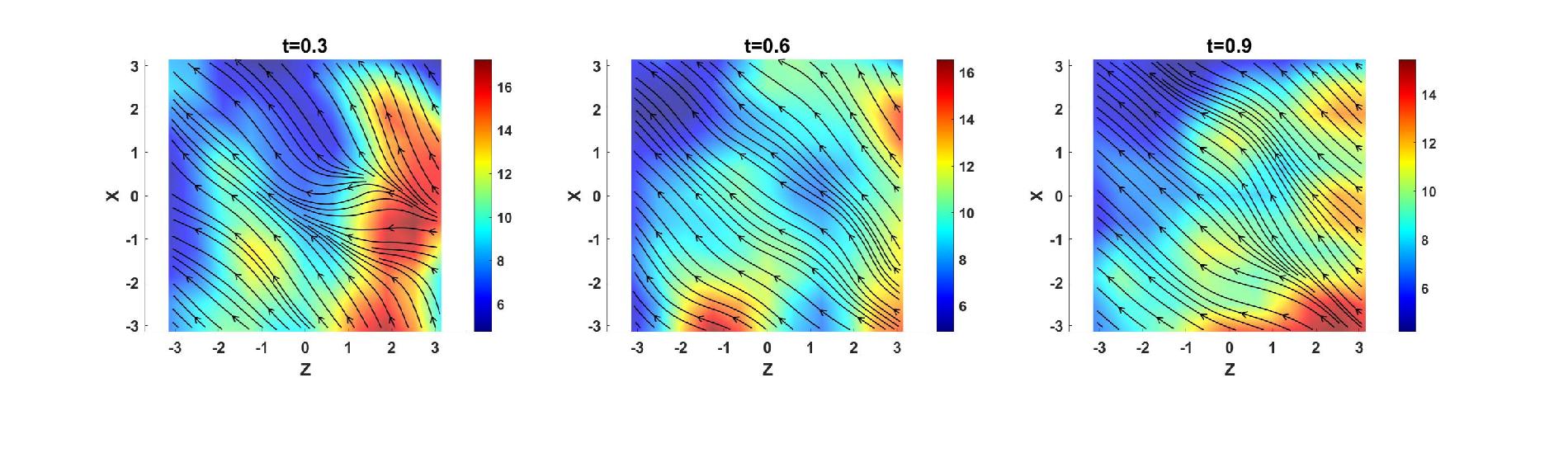}

\caption{\foreignlanguage{english}{Section Velocity fields of incompressible viscous flows on plane $y=0$}}
\label{l_y}
\end{figure}

The initial vorticity is apparent  only in plane $z=C$ ($C$ is constant),
it is clear that more obvious vortexes and changes appear in the horizontal
section because the force $F$ applying into the flows system is constant.
So, in plane $x=C$ and plane $y=C$, the fluid shows a tendency
to flow normally, flows in the direction of the external force.

\subsection{Turbulence flows}

We present a numerical experiment showing a 3D turbulent flow, a flow with large Reynolds number. In the numerical
experiment,  $\nu=0.15$, $Re=4500$, length scale $L=2\pi$,
so that $U_{0}=107.5$. The mesh size $s=\frac{2\pi}{40}$, and the time step
$\delta=0.001$. The numerical experiment results are demonstrated at times
$t=0.05,t=0.1,t=0.15$. We set the initial velocity to be of the form
$u(x,0)=(U_{0}\sin(2x_{1}),U_{0}\cos(2x_{1}),0)$, set force $F=(50,50,-9.81)$.
The velocity field and vorticity field are shown in Figure \ref{t_u} and Figure \ref{t_w}. 

\begin{figure}[H]
\includegraphics[width=1\textwidth]{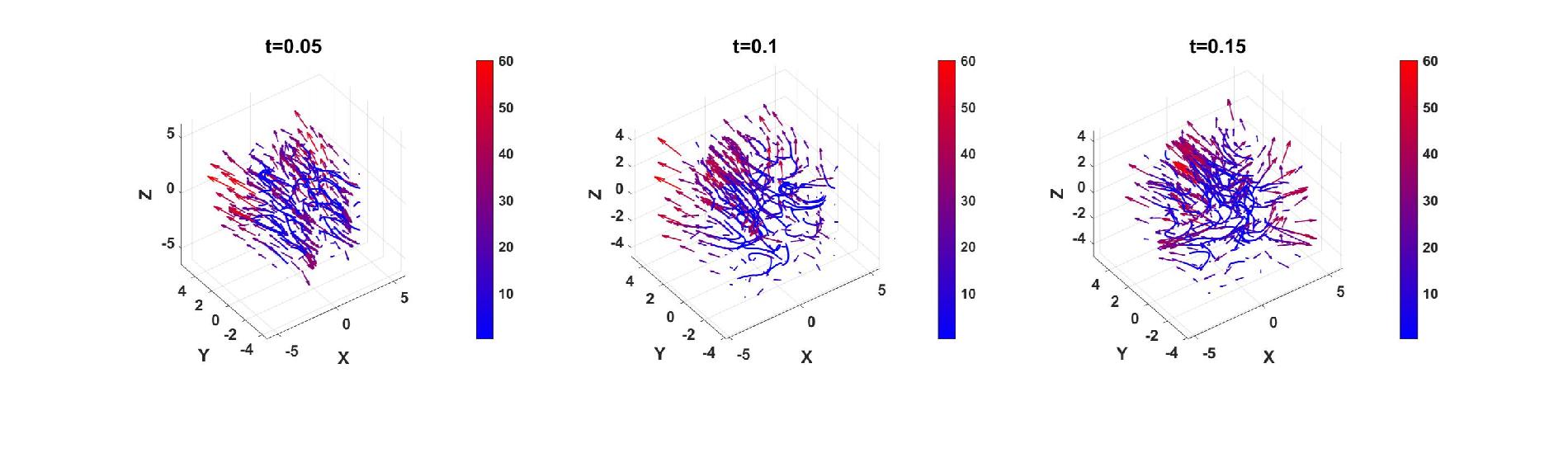}

\caption{\foreignlanguage{english}{Velocity fields of incompressible viscous flows on $\mathbb{R}^{3}$}}
\label{t_u}
\end{figure}

\begin{figure}[H]
\includegraphics[width=1\textwidth]{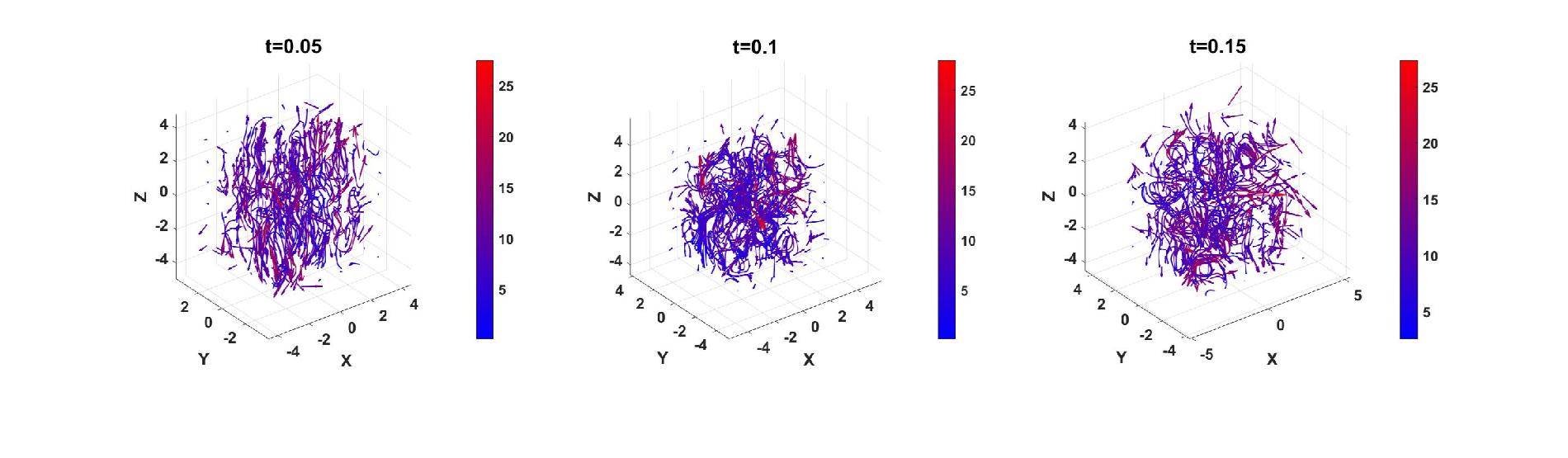}

\caption{\foreignlanguage{english}{Vorticity fields of incompressible viscous flows flows on $\mathbb{R}^{3}$}}
\label{t_w}
\end{figure}

In turbulence flows, given the initial velocity, the vorticity is larger and the vorticity field is more chaotic under larger velocity fields and external forces. To view the flows clearly, we also print the vector fields of section
with plane $z=0$, plane $x=0$, and plane $y=0$ in Figure \ref{t_z}, Figure \ref{t_x},
Figure \ref{t_y} respectively.

\begin{figure}[H]
\includegraphics[width=1\textwidth]{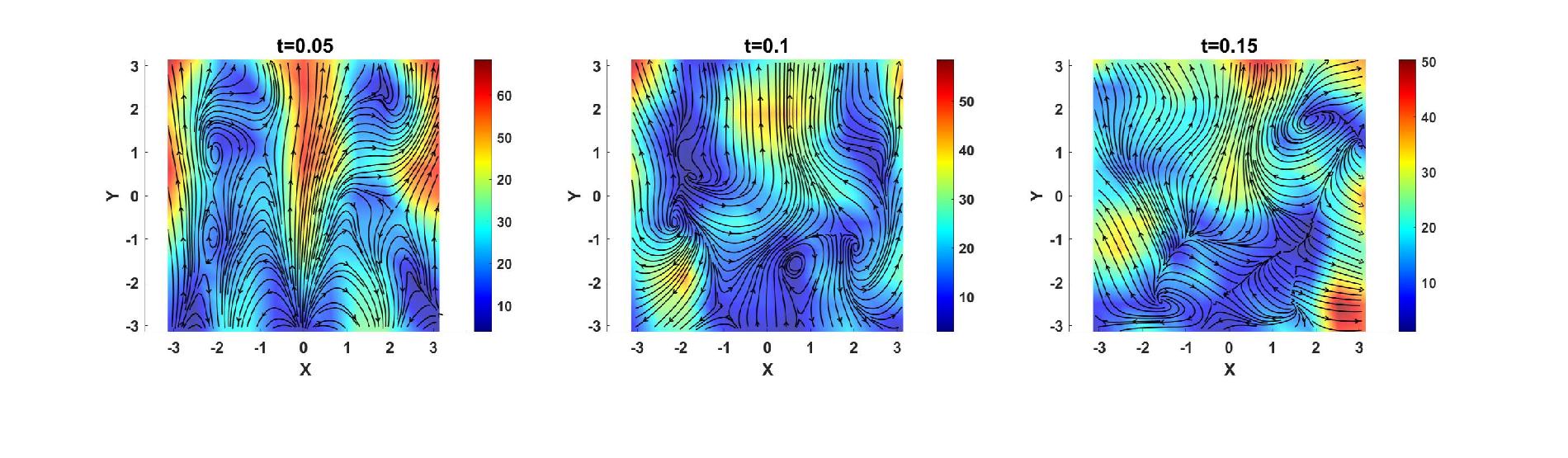}

\caption{\foreignlanguage{english}{Section Velocity fields of incompressible viscous flows on plane $z=0$}}
\label{t_z}
\end{figure}
Compared with laminar flow, more eddies and more complex flows appear on plane $z=0$ in turbulence flow. Because the initial velocity and force are relatively large, the changes in the fluid are also more significant.

\begin{figure}[H]
\includegraphics[width=1\textwidth]{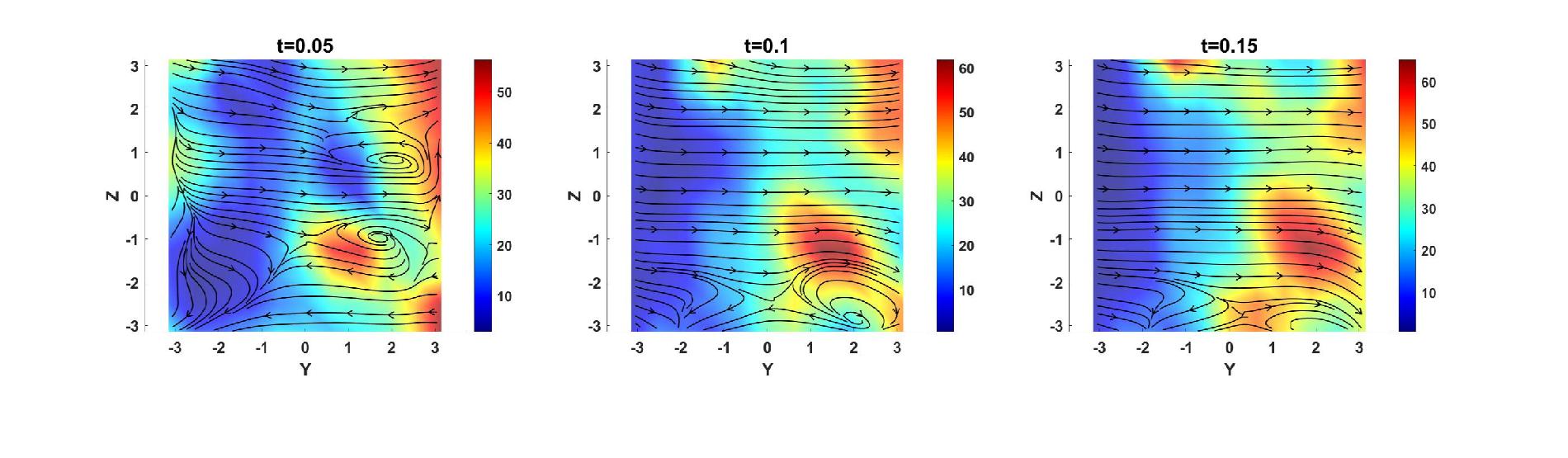}

\caption{\foreignlanguage{english}{Section Velocity fields of incompressible viscous flows on plane $x=0$}}
\label{t_x}
\end{figure}

\begin{figure}[H]
\includegraphics[width=1\textwidth]{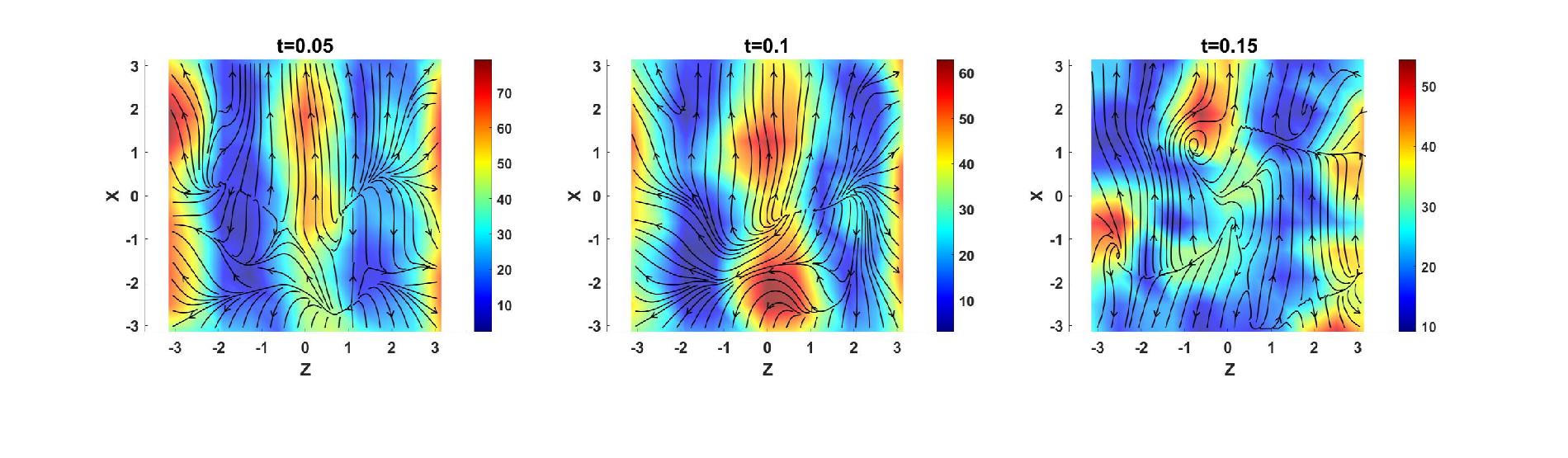}

\caption{\foreignlanguage{english}{Section Velocity fields of incompressible viscous flows on plane $y=0$}}
\label{t_y}
\end{figure}

Although the initial vorticity is only given in plane $z=C$ ($C$
is constant), there are still fewer vortexes and changes appearing
in plane $x=C$ and plane $y=C$ under turbulence case and large external
force.

\section{Appendix}

In this section, we derive the integral representation theorem which
is the key ingredient in the present work. 

Suppose $b(x,t)$ is a time-dependent divergence-free vector field
on $\mathbb{R}^{d}\times[0,\infty)$, i.e. $\nabla\cdot b=0$. Let
$L_{b}=\nu\Delta+b\cdot\nabla$, then the $L^{2}$-adjoint operator
of forward heating operator $L_{b}-\partial_{t}$ coincides with backward
heat operator $L_{-b}+\partial_{t}$ (see \citep{Friedman1964} for
details). Let $\Omega=C([0,\infty),\mathbb{R}^{d})$ be the continuous
path space in $\mathbb{R}^{d}$, define $X_{t}:\Omega\mapsto\mathbb{R}^{d}$
be the coordinate process on $\Omega$. Let $\mathscr{F}_{t}^{0}=\sigma\left\{ X_{s}:s\leq t\right\} $
be the smallest $\sigma$-algebra on $\Omega$ and $\mathscr{F}^{0}=\sigma\left\{ X_{s}:s<\infty\right\} $.
Then $\mathscr{F}^{0}=\mathcal{B}(\Omega)$ is the Borel $\sigma$-algebra
on $\Omega$ generated by the uniform convergence over any bounded
subset $[0,\infty).$ We assume $b(x,t)$ is bounded and Borel measure,
then for $\xi\in\mathbb{R}^{d},\tau\geq0$, there is a unique probability
measure $\mathbb{P}_{b}^{\xi,\tau}$ on $(\Omega,\mathscr{F}^{0})$
such that
\[
\mathbb{P}[X_{s}=\xi\;\textrm{ for }\;s\leq\tau]=1
\]
and
\[
M_{t}^{[f]}=f(X_{t},t)-f(X_{\tau},\tau)-\int_{\tau}^{t}(L_{b}f)(X_{s},s)\mathrm{d}s
\]
is a local martingale ($\tau\leq t$) under probability measure $\mathbb{P}_{b}^{\xi,\tau}$,
where $f\in C_{b}^{2,1}(\mathbb{R}^{d}\times[0,\infty))$. $\mathbb{P}_{b}^{\xi,\tau}$
is called $L_{b}$-diffusion, and $L_{b}$ is called the infinitesimal
generator of $\mathbb{P}_{b}^{\xi,\tau}$ \citep{Stroock1979}. $\mathbb{P}_{b}^{\xi,\tau}$
can be constructed as the weak solution of stochastic differential
equation
\[
\mathrm{d}X_{t}=b(X_{t},t)\mathrm{d}t+\sqrt{2\nu}\mathrm{d}B_{t},\quad X_{\tau}=\xi,
\]
where $B$ is a Brownian motion on probability space. Let $p_{b}(\tau,\xi;t,x)$
denote the transition function of of $L_{b}$-diffusion, which is
positive and continuous in all arguments for $\tau<t$ and $\xi,x\in\mathbb{R}^{d}$,
in the sense that 
\[
p_{b}(\tau,\xi;t,x)\mathrm{d}x=P(\tau,\xi,t,\mathrm{d}x)=\mathbb{P}_{b}^{\xi,\tau}[X_{t}\in\mathrm{d}x],
\]
see \citep{Stroock1979} for details. Given $T>0$, $\mathbb{P}_{b}^{\xi,\tau}[\cdot|X_{T}=\eta]$
denotes the conditional law of the $L_{b}$-diffusion. Then it has
been established in \citep{Qian2022} that the conditional law $\mathbb{P}_{b}^{\xi,0\rightarrow\eta,T}$
coincide with the conditional law $\mathbb{P}_{-b^{T}}^{\eta,0\rightarrow\xi,T}\circ\tau_{T}$
up to a time reverse at $T$, where $b^{T}(x,t)=b(x,(T-t)^{+})$.
In fact it follows immediately from the fact that, since $b$ is divergence-free, 
\[
p_{-b^{T}}(s,x;t,y)=p_{b}(T-t,y;T-s,x)
\]
for $0\leq s<t\leq T$ and $x,y\in\mathbb{R}^{d}$, cf. \citep{Qian2022}
for details. 

Let $\Psi(x,t)$ be the smooth solution of parabolic equation

\begin{equation}
\left(L_{-b}-\frac{\partial}{\partial t}\right)\Psi+f=0\quad\textrm{ in }\mathbb{R}^{d}\times[0,\infty)\label{eq:parabolic1}
\end{equation}
where $f(x,t)=$ is $C^{2,1}$-function. Assume that $b(x,t)$ is $C^{2,1}$ and bounded, then we have the following theorem.
\begin{thm}
\label{Thm:representation}Suppose $b(x,t)$ is  divergence-free, that is,  $\nabla\cdot b=0$ in the sense of distribution on $\mathbb{R}^d$. Then
\begin{align}
\Psi(\xi,T) & =\int_{\mathbb{R}^{d}}p_{b}(0,\eta;T,\xi)\Psi(\eta,0)\mathrm{d}\eta\nonumber \\
 & +\int_{0}^{T}\int_{\mathbb{R}^{d}}\mathbb{P}_b^{\eta,0}\left[f(X_{t},t)\left|X_{T}=\xi\right.\right]p_{b}(0,\eta;T,\xi)\mathrm{d}\eta\mathrm{d}t\label{eq:representation1}
\end{align}
for $\xi\in\mathbb{R}^{d}$ and $T>0$.
\end{thm}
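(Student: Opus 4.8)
The plan is to obtain the representation by time-reversing the parabolic problem, applying Itô's formula to the reversed diffusion, and then translating the resulting expectations back through the two duality relations recorded above (throughout I write $\mathbb{E}$ for the expectation that the statement denotes by the bracket $\mathbb{P}_b^{\eta,0}[\,\cdot\mid\cdot\,]$). The key observation is that \eqref{eq:parabolic1} is a \emph{forward} parabolic equation for the generator $L_{-b}$, so to represent the value $\Psi(\xi,T)$ at the final instant $T$ one should run a diffusion backward from $T$. Concretely, I would fix $T>0$ and $\xi\in\mathbb{R}^{d}$, set $\phi(x,s)=\Psi(x,T-s)$ for $s\in[0,T]$, and let $Y$ be the $L_{-b^{T}}$-diffusion issued from $\xi$ at time $0$, i.e. $Y\sim\mathbb{P}_{-b^{T}}^{\xi,0}$, solving $\mathrm{d}Y_{s}=-b^{T}(Y_{s},s)\mathrm{d}s+\sqrt{2\nu}\,\mathrm{d}B_{s}$ with $Y_{0}=\xi$. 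Since $b$ is bounded and $C^{2,1}$ this equation is well posed. Note that the divergence-free hypothesis enters only later, through the density and law dualities.

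First I would apply Itô's formula to $\phi(Y_{s},s)$. Using $\partial_{s}\phi(x,s)=-\partial_{t}\Psi(x,T-s)$ together with $\partial_{t}\Psi=L_{-b}\Psi+f$ from \eqref{eq:parabolic1}, and recalling $b^{T}(\cdot,s)=b(\cdot,T-s)$ on $[0,T]$, one checks that $\partial_{s}\phi+L_{-b^{T}}\phi=-f(\cdot,T-s)$; the drift and Laplacian terms cancel exactly against the generator of $Y$. Hence
\begin{equation}
\mathrm{d}\phi(Y_{s},s)=-f(Y_{s},T-s)\,\mathrm{d}s+\sqrt{2\nu}\,\nabla\phi(Y_{s},s)\cdot\mathrm{d}B_{s}.\nonumber
\end{equation}
Integrating over $[0,T]$, using $\phi(Y_{0},0)=\Psi(\xi,T)$ and $\phi(Y_{T},T)=\Psi(Y_{T},0)$, taking expectations to annihilate the stochastic integral, and substituting $s\mapsto T-s$ in the time integral yields
\begin{equation}
\Psi(\xi,T)=\mathbb{E}_{-b^{T}}^{\xi,0}\!\left[\Psi(Y_{T},0)\right]+\int_{0}^{T}\mathbb{E}_{-b^{T}}^{\xi,0}\!\left[f(Y_{T-t},t)\right]\mathrm{d}t.\nonumber
\end{equation}
If $\nabla\Psi$ is not a priori bounded, I would first stop at exit times of large balls and pass to the limit using the smoothness and growth of $\Psi$, so that the local martingale is seen to have zero mean.

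It then remains to recast both terms in the stated form. For the first term, expanding $\mathbb{E}_{-b^{T}}^{\xi,0}[\Psi(Y_{T},0)]=\int_{\mathbb{R}^{d}}p_{-b^{T}}(0,\xi;T,\eta)\Psi(\eta,0)\,\mathrm{d}\eta$ and applying the transition-density duality $p_{-b^{T}}(0,\xi;T,\eta)=p_{b}(0,\eta;T,\xi)$ produces exactly the first line of \eqref{eq:representation1}. For the second term I would condition on the endpoint $Y_{T}=\eta$,
\begin{equation}
\mathbb{E}_{-b^{T}}^{\xi,0}\!\left[f(Y_{T-t},t)\right]=\int_{\mathbb{R}^{d}}\mathbb{E}_{-b^{T}}^{\xi,0\rightarrow\eta,T}\!\left[f(Y_{T-t},t)\right]p_{-b^{T}}(0,\xi;T,\eta)\,\mathrm{d}\eta,\nonumber
\end{equation}
and then invoke the conditional-law duality. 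Swapping the endpoints in the stated relation and composing with $\tau_{T}$ gives $\mathbb{P}_{-b^{T}}^{\xi,0\rightarrow\eta,T}=\mathbb{P}_{b}^{\eta,0\rightarrow\xi,T}\circ\tau_{T}$, so that under the $L_{-b^{T}}$-bridge from $\xi$ to $\eta$ the reversed coordinate path $Y_{T-\cdot}$ is distributed as the $L_{b}$-bridge from $\eta$ to $\xi$; consequently $Y_{T-t}$ matches $X_{t}$ and $\mathbb{E}_{-b^{T}}^{\xi,0\rightarrow\eta,T}[f(Y_{T-t},t)]=\mathbb{E}_{b}^{\eta,0}[f(X_{t},t)\mid X_{T}=\xi]$. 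Combining this with $p_{-b^{T}}(0,\xi;T,\eta)=p_{b}(0,\eta;T,\xi)$ delivers the second line of \eqref{eq:representation1}.

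The main obstacle I anticipate is the bookkeeping of the time reversal in this last step: one must verify that $\tau_{T}$ sends the functional $f(Y_{T-t},t)$ to $f(X_{t},t)$ with the \emph{same} time argument $t$ on both sides, which is precisely what makes the conditional-law duality output the $L_{b}$-bridge conditioned at $X_{T}=\xi$ rather than a mismatched expression. A secondary technical point is justifying that the local martingale in the Itô expansion is a genuine martingale (or handling it by localization), which relies only on the regularity and growth of $\Psi$ and not on any new idea.
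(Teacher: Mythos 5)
Your proposal is correct and follows essentially the same route as the paper's own proof: Itô's formula applied to the time-reversed $L_{-b^{T}}$-diffusion started at $\xi$, followed by the transition-density duality $p_{-b^{T}}(0,\xi;T,\eta)=p_{b}(0,\eta;T,\xi)$ for the initial-data term and the conditional-law (bridge) duality $\mathbb{P}_{-b^{T}}^{\xi,0\rightarrow\eta,T}=\mathbb{P}_{b}^{\eta,0\rightarrow\xi,T}\circ\tau_{T}$ for the source term. The only differences are cosmetic --- you perform the substitution $s\mapsto T-t$ before conditioning rather than after, and you make explicit the localization needed to kill the stochastic integral, a point the paper passes over silently.
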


\begin{proof}
First, we construct a diffusion associated with time-reversal vector
field $b^{T}$ by solving the following SDE:
\[
\mathrm{d}\tilde{X_{t}^{\xi}}=-b\left(\tilde{X_{t}^{\xi}},T-t\right)\mathrm{d}t+\sqrt{2\nu}\mathrm{d}B_{t},\quad\tilde{X_{0}^{\xi}}=\xi
\]
whose infinitesimal generator is $L_{-b^{T}}$. Let $Y_{t}=\Psi(\tilde{X_{t}^{\xi}},T-t)$,
then by It\^o formula and equation \ref{eq:parabolic1}
\[
Y_{t}=Y_{0}+\sqrt{2\nu}\int_{0}^{t}\nabla\Psi(\tilde{X_{s}^{\xi}},T-s)\cdot\mathrm{d}B_{s}-\int_{0}^{t}f(\tilde{X_{s}^{\xi}},T-s)\mathrm{d}s
\]
Taking expectation of both sides and $t=T$, we have
\begin{align*}
\Psi(\xi,T) & =\mathbb{E}\left[\Psi(\tilde{X_{T}^{\xi}},0)\right]+\int_{0}^{T}\mathbb{E}\left[f(\tilde{X_{t}^{\xi}},T-t)\right]\mathrm{d}s\\
 & =J_{1}+J_{2}
\end{align*}
By using conditional expectation, we may write
\begin{align*}
J_{2}(\xi,T) & =\int_{0}^{T}\int_{\mathbb{R}^{d}}\mathbb{E}\left[f(\tilde{X_{t}^{\xi}},T-t)\left|\tilde{X_{T}^{\xi}}=\eta\right.\right]\mathbb{P}\left[\tilde{X_{T}^{\xi}}\in\mathrm{d}\eta\right]\mathrm{d}t\\
 & +\int_{0}^{T}\int_{\mathbb{R}^{d}}\mathbb{E}\left[f(\tilde{X_{t}^{\xi}},T-t)\left|\tilde{X_{T}^{\xi}}=\eta\right.\right]p_{-b^{T}}(0,\xi;T,\eta)\mathrm{d}\eta\mathrm{d}t
\end{align*}
Similarly
\[
J_{1}(\xi,T)=\int_{\mathbb{R}^{d}}\Psi^{i}(\eta,0)p_{-b^{T}}(0,\xi;T,\eta)\mathrm{d}\eta.
\]
Since $\nabla\cdot b=0$, we have $p_{b}(0,\eta;T,\xi)=p_{-b^{T}}(0,\xi;T,\eta)$,
we can get
\[
J_{1}(\xi,T)=\int_{\mathbb{R}^{d}}\Psi(\eta,0)p_{b}(0,\eta;T,\xi)\mathrm{d}\eta
\]
and
\[
J_{2}(\xi,T)=\int_{0}^{T}\int_{\mathbb{R}^{d}}\mathbb{E}\left[f(\tilde{X_{t}^{\xi}},T-t)\left|\tilde{X_{T}^{\xi}}=\eta\right.\right]p_{b}(0,\eta;T,\xi)\mathrm{d}\eta\mathrm{d}t.
\]
Let $\mathbb{P}_{-b^{T}}^{\xi,0}$ be the distribution of $\tilde{X^{\xi}}$,
and $\mathbb{P}_{-b^{T}}^{\xi,0\rightarrow\eta,T}$ is the conditional
distribution under condition $\tilde{X_{T}^{\xi}}=\eta$, then we
can rewrite $J_{2}$ as
\[
J_{2}(\xi,T)=\int_{0}^{T}\int_{\mathbb{R}^{d}}\mathbb{P}_{-b^{T}}^{\xi,0\rightarrow\eta,T}\left[f(\psi(t),T-t)\right]p_{b}(0,\eta;T,\xi)\mathrm{d}\eta\mathrm{d}t
\]
which can be written as
\[
J_{2}(\xi,T)=\int_{0}^{T}\int_{\mathbb{R}^{d}}\mathbb{P}_{b}^{\eta,0\rightarrow\xi,T}\left[f(\psi(T-t),T-t)\right]p_{b}(0,\eta;T,\xi)\mathrm{d}\eta\mathrm{d}t
\]
Therefore after a change of variable for $t$ to $T-t$, 
\[
J_{2}(\xi,T)=\int_{0}^{T}\int_{\mathbb{R}^{d}}\mathbb{P}_{b}^{\eta,0\rightarrow\xi,T}\left[f(\psi(t),t)\right]p_{b}(0,\eta;T,\xi)\mathrm{d}\eta\mathrm{d}t
\]
which yields the integral representation \ref{eq:representation1}.
\end{proof}

\section*{Data Availability Statement}

No data are used in this article to support the findings of this study.

\section*{Declaration of Interests}

The authors report no conflict of interest.

\section*{Acknowledgement}
The authors would like to thank Oxford Suzhou Centre for Advanced
Research for providing the excellent computing facility.

\end{document}